\newtheorem{theorem}{Theorem}[section]
\newtheorem{assumption}{Assumption}
\newtheorem{proposition}{Proposition}
\newtheorem{lemma}{Lemma}
\title{Ridge regularization for Mean Squared Error Reduction in Regression with Weak Instruments}
\author{
	Karthik Rajkumar\thanks{Department of economics, Stanford University. krajkumar@stanford.edu. I am very grateful to my advisor, Guido Imbens, for advising me through this project with patience and insight.}
}
\begin{document}
	
	\maketitle
	
	\begin{abstract}
		In this paper, I show that classic two-stage least squares (2SLS) estimates are highly unstable with weak instruments. I propose a ridge estimator (ridge IV) and show that it is asymptotically normal even with weak instruments, whereas 2SLS is severely distorted and unbounded. I motivate the ridge IV estimator as a convex optimization problem with a GMM objective function and an $L_2$ penalty. I show that ridge IV leads to sizable mean squared error reductions theoretically and validate these results in a simulation study inspired by data designs of papers published in the American Economic Review.
		
	\end{abstract}

\section{Introduction}
Instrumental variables are widely used in applied economics and other social sciences to establish causal relationships in the absence of experimental variation. Under standard assumptions, instrumental variable (IV) regression estimators are unbiased and asymptotically consistent. However, when the first-stage, i.e. the relationship between instrumental variables and the independent variable at hand, is weak, inference with IV regression is distorted and a vast literature has emerged showing this. In particular, size is not controlled in finite sample and several finite-sample corrections of the IV estimator have been proposed to tackle this. These methods attempt to remove bias in small sample in a way that washes out as sample sizes grow. \\

Many of these methods, however, do so in a static way, which sometimes leads to overcorrection or even having no effect at all. We propose a ridge estimator for IV regression that attempts to alleviate the bias in a way that is tunable to the data. We motivate the estimator by returning to the classical perspective of IV regression as a ratio of two estimands (in the case of a single, just-identified IV), and showing that our approach is equivalent to stabilizing the denominator away from 0, thus avoiding the division-by-zero problem.\\


The paper is organized as follows. Section \ref{sec-lit} provides background on the weak instrument problem and IV outlier sensitivity problem. Section \ref{sec-model} introduces the data model we use throughout the paper. Section \ref{sec-ridge} motivates and defines the ridge IV estimator and shows full preservation of efficiency under ``large'' parameters. Section \ref{sec-staiger-stock} takes a local asymptotic approach to modeling weak instruments and shows using theory how ridge IV leads to drastic reductions in mean squared error. Section \ref{sec-interpretation} interpets the ridge IV estimator as the solution to a convex optimization problem with a GMM objective function with an $L_2$ penalty on the coefficient. Section \ref{sec-results} uses a simulation study whose parameters are tuned to be consistent with the data generating processes implied in a sample of papers published in the American Economic Review and shows how ridge IV can lower MSE in practice. Section \ref{sec-disc} concludes. \\

\section{Literature} \label{sec-lit}

There is a long literature in econometrics studying weak instruments. \cite{young2018consistency} raises concerns on the quality of inference with instrumental variables. Specifically, the problems cited are weak or irrelevant instruments, non-iid error processes, and distortion in inference from one or two observations. A major claim in that paper is that IV estimates have larger MSE than OLS (OLS here referring to the structural equation of directly regressing the dependent variable on the endogenous variable, bypassing any instruments). In situations arising in practice, one is unable to tell the two estimates apart in the sense that IV confidence intervals generally include OLS anyway, and implies a preference for OLS. This calls into question the utility of traditional econometric tests for endogeneity, such as the Hausman test (e.g. \cite{hausman1978specification}).\\

 \cite{andrews2018weak} provides a recent survey on weak instrument diagnostics, and inferential methods that are robust to weak instruments. They focus on the case of heteroskedastic and possibly non-iid errors, and make a case for the adoption of the robust F-statistic proposed by \cite{olea2013robust} in the case of a single endogenous regressor. \cite{young2018consistency} counters this, claiming that such pre-tests for detecting weak instruments do little to accurately diagnose the problem in practice.\\
 
 \subsection{The weak instrument problem} 
 The primary problem with weak instruments is that they are biased towards OLS. This gives tests the wrong size, and leads to misleading inference. By far, the most popular case in the literature appears to be the just-identified case with a single instrument and heteroskedasticity \citep{andrews2018weak}. To combat this, the most common approach used in the literature is some form of the following two-stage process:
 \begin{enumerate}
 	\item $F \geq 10$. Here instruments are not weak and regular 2SLS inference is used.
 	\item $F < 10$. Various ``weak instrument robust'' methods are used. 
 \end{enumerate}

What are some of these weak instrument robust methods? \cite{hirano2015location} show there does not exist an unbiased or asymptotically unbiased IV estimator. Anderson-Rubin confidence sets \citep{anderson1949estimation} are optimal in just-identified case with single instrument, even with heteroskedasticity. In over-identified models, conditional LR (CLR) test is a good test for homoskedastic settings as it is fully robust to weak instruments. \cite{andrews2017unbiased} provide methods that are unbiased when sign of first-stage coefficient is known a priori.\\

 For the scope of this paper, we address the large MSE critique of 2SLS. We provide a novel estimator, guiding theory, as well as simulation evidence for the utility of this estimator. 

\section{Model} \label{sec-model}
We begin with a a just-identified setting with one instrument. Our data takes the form 
\begin{align}
\label{model}
Y_{i}=\beta_{0}+\beta_{1}D_{i}+\epsilon_{i} \\
D_{i}=\pi_{0}+\pi_{1}Z_{i}+\eta_{i}. \notag
\end{align}

Each datapoint is iid and the instruments are exogenous, i.e. $Z_{i}$ are independent of $\eta_{i}$ and $\epsilon_{i}$. In this notation, $Y_i$ is the outcome variable of interest, and $D_i$ is the endogenous variable whose effect of the outcome one is interested in.\\

The main contention we want to address with our paper is that 2SLS is more sensitive because it is a ratio of two things and its p-value does not account for this stochasticity of the denominator. To explain this simply, in the just-identified case, our 2SLS estimator is 
\[
\hat{\beta}_{2SLS}=\frac{\sum_{i}(Y_{i}-\bar{Y})(Z_{i}-\bar{Z})}{\sum_{i}(D_{i}-\bar{D})(Z_{i}-\bar{Z})},
\]
which basically is
\[
=\frac{\hat{\text{Cov}}[Y_{i},Z_{i}]}{\hat{\text{Cov}}[D_{i},Z_{i}]} := \frac{a_n}{b_n},
\]
where $b_n \rightarrow 0$. Essentially, this is a a division-by-zero problem.

\section{Ridge estimation} \label{sec-ridge}

To address the weak instrument problem, we propose adding a bias to the denominator of the 2SLS, in the case of just-identified single instrument. This turns the estimator into
\[
\frac{a_n}{b_n + \lambda},
\]
where $\lambda$ is a tuning parameter, appropriately chosen. This serves the dual objectives of stabilizing the denominator of the IV estimator and also shrinking the coefficient toward zero, thus controlling type-1 error. This naturally motivates a ridge estimator in the case of just-identified with multiple instruments too:
\begin{equation}
(Z^{\prime}D+\lambda I)^{-1}Z^{\prime}Y.
\end{equation}
Similarly for an over-identified setting, the ridge estimator modifies to
\begin{equation}
\label{defn-overidentified-ridgeIV}
(D^{\prime}Z(Z^{\prime}Z)^{-1}Z^{\prime}D+\lambda I)^{-1}(D^{\prime}Z(Z^{\prime}Z)^{-1}Z^{\prime}Y).
\end{equation}

Suppose we allowed that the ridge estimator allowed for the penalty parameter to vary with sample size. That is, $\lambda$ is indexed by $n$ to give us a full sequence of penalty parameters. In our specific example from equation \ref{model}, this would mean our estimator is 
\begin{align}
\hat{\beta}_{\text{ridge}} & = \frac{\sum_{i}(Y_{i}-\bar{Y})(Z_{i}-\bar{Z})}{\sum_{i}(D_{i}-\bar{D})(Z_{i}-\bar{Z})+\lambda_{n}} \notag\\
& = \frac{\hat{\text{Cov}}[Y_{i},Z_{i}]}{\hat{\text{Cov}}[D_{i},Z_{i}]+\frac{\lambda_{n}}{n}}. 
\label{eqn-ridge-defn}
\end{align}

Before we dive into the asymptotic properties of the ridge estimator, we make a note on the sampling process we assume for the data. 

\subsection{Sampling assumptions on the data}
There are many assumptions that may be used for the data generating process. Suppose we condition on the instruments, $Z_i$, treating them as ``constants'' and only assuming the existence of probability limits of their moments. Then the classic OLS asymptotic results hold. What does that mean? Let us formalize this idea.

Specifically 

\begin{assumption}
	\label{assumption-constant}
	The instruments $Z_i$ are ``constants'' with respect to the error shocks. This is achieved by conditioning on the instruments in hand. Further, under this sampling scheme, we assume $\frac{1}{n} \sum_{i}Z_i^2 \rightarrow 1$.
\end{assumption}

Now consider the first stage equation:
\[
D_i = \pi_1 Z_i + \eta_i.
\]
We ignore constants for the time being.

\begin{lemma}[CLT for OLS---Version 1]
	Under Assumption \ref{assumption-constant}, the CLT for OLS is 
	\[
	\sqrt{n} \left( \hat{\pi_1} - \pi_1 \right) \stackrel{d}{\rightarrow} \mathcal{N}\left( 0, \sigma_{\eta}^2 \right).
	\]
\end{lemma}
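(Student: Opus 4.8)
The plan is to reduce the claim to a central limit theorem for a sum of independent (but non-identically distributed) terms via an explicit algebraic decomposition of the estimator. First I would write the no-intercept OLS estimator in closed form as
\[
\hat{\pi}_1 = \frac{\sum_i Z_i D_i}{\sum_i Z_i^2},
\]
substitute the first-stage equation $D_i = \pi_1 Z_i + \eta_i$, and cancel the $\pi_1 \sum_i Z_i^2$ terms to obtain the exact sampling error
\[
\hat{\pi}_1 - \pi_1 = \frac{\sum_i Z_i \eta_i}{\sum_i Z_i^2}.
\]
Scaling by $\sqrt{n}$ and dividing top and bottom by $n$ then gives
\[
\sqrt{n}\left(\hat{\pi}_1 - \pi_1\right) = \frac{\frac{1}{\sqrt{n}}\sum_i Z_i \eta_i}{\frac{1}{n}\sum_i Z_i^2},
\]
so the problem splits cleanly into a scaled sum in the numerator and a sample second moment in the denominator.

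Next I would handle the two pieces separately and recombine with Slutsky's theorem. By Assumption \ref{assumption-constant}, the denominator $\frac{1}{n}\sum_i Z_i^2 \to 1$, so it contributes no asymptotic randomness. For the numerator, conditioning on the instruments makes the summands $Z_i \eta_i$ independent across $i$ with mean zero (by exogeneity, $\text{E}[\eta_i]=0$) and variance $Z_i^2 \sigma_\eta^2$; hence the total variance of $\frac{1}{\sqrt{n}}\sum_i Z_i \eta_i$ equals $\sigma_\eta^2 \cdot \frac{1}{n}\sum_i Z_i^2 \to \sigma_\eta^2$. Establishing a central limit theorem for this array yields $\frac{1}{\sqrt{n}}\sum_i Z_i \eta_i \stackrel{d}{\to} \mathcal{N}(0, \sigma_\eta^2)$, and Slutsky's theorem together with the denominator limit of $1$ delivers $\sqrt{n}(\hat{\pi}_1 - \pi_1) \stackrel{d}{\to} \mathcal{N}(0, \sigma_\eta^2)$.

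The main obstacle is justifying the CLT for the numerator: with the $Z_i$ fixed, the summands are independent but not identically distributed, so the ordinary i.i.d.\ CLT does not apply directly. I would instead invoke the Lindeberg--Feller CLT, which requires verifying the Lindeberg condition
\[
\frac{1}{s_n^2}\sum_{i=1}^n \text{E}\!\left[Z_i^2 \eta_i^2 \,\mathbf{1}\{|Z_i \eta_i| > \varepsilon s_n\}\right] \to 0, \qquad s_n^2 = \sigma_\eta^2 \sum_{i=1}^n Z_i^2,
\]
for every $\varepsilon > 0$. This is where an additional regularity condition on the instrument sequence is needed --- for instance a uniform asymptotic negligibility condition such as $\max_{i \le n} Z_i^2 / \sum_{i \le n} Z_i^2 \to 0$ paired with a finite moment bound on $\eta_i$ (e.g.\ $\text{E}|\eta_i|^{2+\delta} < \infty$, which lets one replace Lindeberg by the simpler Lyapunov condition). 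I would state precisely which such condition I am using, note that it should be read as part of (or appended to) Assumption \ref{assumption-constant}, after which the remaining steps are routine.
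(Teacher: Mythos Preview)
Your proposal is correct and follows essentially the same route as the paper: decompose $\hat{\pi}_1-\pi_1$ via the first-stage equation into a normalized sum of the independent, non-identically distributed terms $Z_i\eta_i$, apply a triangular-array CLT (the paper invokes Liapounov directly, you discuss Lindeberg--Feller and note Lyapunov as the convenient sufficient condition), and use $\frac{1}{n}\sum_i Z_i^2\to 1$ to clean up the normalization. If anything, you are more careful than the paper in flagging the moment and negligibility conditions that the Liapounov/Lindeberg step actually requires, which the paper leaves implicit.
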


\begin{proof}
	This is easily proven using the Liapounov CLT. Specifically, we have 
	\[
	D_i Z_i = \pi_1 Z_i^2 + Z_i \eta_i \sim \left( \pi_1 Z_i^2, Z_i^2 \sigma_{\eta}^2 \right).
	\]
	Thus, Liapounov CLT tells us that
	\[
	\frac{\sum_i D_i Z_i - \pi_1 Z_i^2}{\sqrt{\sum_i Z_i^2 \sigma_{\eta}^2}} \stackrel{d}{\rightarrow} \mathcal{N}(0, 1).
	\]
	Using the second moment condition assumed from the $Z_i$ sampling, we get the desired result.
\end{proof}

However, this is not the result one get when assuming $Z_i$ is fully stochastic!

\begin{assumption}
	\label{assumption-stochastic}
	Suppose the instruments $Z_i$ are fully stochastic. Without loss of generality, we assume unit variance and also bounded fourth moment, $m_4$.
\end{assumption}

\begin{lemma}[CLT for OLS---Version 2]
	Under Assumption \ref{assumption-stochastic}, the CLT for OLS is 
	\[
	\sqrt{n} \left( \hat{\pi_1} - \pi_1 \right) \stackrel{d}{\rightarrow} \mathcal{N}\left( 0, \pi_1^2 (m_4 - 1) + \sigma_{\eta}^2\right).
	\]
\end{lemma}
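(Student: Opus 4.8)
The plan is to compute the OLS estimator explicitly and analyze the asymptotic behavior of its numerator when $Z_i$ is itself random. Writing $\hat\pi_1 = \frac{\sum_i D_i Z_i}{\sum_i Z_i^2}$ with $D_i = \pi_1 Z_i + \eta_i$, I get $\hat\pi_1 - \pi_1 = \frac{\sum_i Z_i \eta_i}{\sum_i Z_i^2}$ when no intercept is present. The crucial difference from Version~1 is that the extra term $\pi_1 Z_i^2$ no longer behaves like a deterministic offset; once $Z_i$ is stochastic, the centering in the CLT must be around $\pi_1 \mathbb{E}[Z_i^2] = \pi_1$ (using unit variance), and the fluctuations of $Z_i^2$ about its mean contribute to the variance. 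So the right object to apply a CLT to is the sum $\sum_i (D_i Z_i - \pi_1)$, not $\sum_i(D_i Z_i - \pi_1 Z_i^2)$.

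First I would rescale correctly: by Slutsky, since $\frac1n\sum_i Z_i^2 \stackrel{p}{\to} 1$ by the law of large numbers (unit variance), the limiting distribution of $\sqrt{n}(\hat\pi_1 - \pi_1)$ is the same as that of $\frac{1}{\sqrt n}\sum_i(D_i Z_i - \pi_1)$. Next I would expand the summand as $D_i Z_i - \pi_1 = \pi_1(Z_i^2 - 1) + Z_i \eta_i$, which is an i.i.d. mean-zero sequence under Assumption~\ref{assumption-stochastic} together with exogeneity ($Z_i \perp \eta_i$, $\mathbb{E}[\eta_i]=0$). Then I would apply the Lindeberg--L\'evy CLT to this i.i.d. sum, so the only remaining work is to compute its per-term variance.

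The variance computation is the heart of the argument. I would compute $\mathrm{Var}\bigl(\pi_1(Z_i^2 - 1) + Z_i \eta_i\bigr)$ by expanding into three pieces: $\pi_1^2 \,\mathrm{Var}(Z_i^2)$, the variance of $Z_i\eta_i$, and twice the covariance term. For the first, $\mathrm{Var}(Z_i^2) = \mathbb{E}[Z_i^4] - (\mathbb{E}[Z_i^2])^2 = m_4 - 1$, giving $\pi_1^2(m_4 - 1)$. For the second, independence gives $\mathrm{Var}(Z_i\eta_i) = \mathbb{E}[Z_i^2]\mathbb{E}[\eta_i^2] = \sigma_\eta^2$. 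The cross term vanishes: $\mathrm{Cov}(Z_i^2, Z_i\eta_i) = \mathbb{E}[Z_i^3]\mathbb{E}[\eta_i] = 0$ by independence and $\mathbb{E}[\eta_i]=0$. Summing yields the claimed variance $\pi_1^2(m_4 - 1) + \sigma_\eta^2$.

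The main obstacle, and the conceptual point the lemma is really making, is recognizing \emph{why} the naive Version~1 calculation fails here. In the conditional-on-$Z$ world the term $\pi_1 Z_i^2$ is a constant and contributes nothing to the sampling variance; treating $Z_i$ as stochastic resurrects it as a genuine source of randomness, and its interaction with the kurtosis $m_4$ is what produces the extra $\pi_1^2(m_4 - 1)$ inflation. The technical care I would take is therefore in the \emph{centering}: making sure the CLT is stated around $\pi_1\mathbb{E}[Z_i^2]$ rather than $\pi_1 Z_i^2$, and justifying the Slutsky step that replaces the random denominator $\frac1n\sum_i Z_i^2$ with its probability limit $1$. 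The bounded fourth moment $m_4$ in Assumption~\ref{assumption-stochastic} is exactly what guarantees the CLT's variance is finite and that the law of large numbers for $\frac1n\sum Z_i^2$ applies.
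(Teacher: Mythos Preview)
Your proof is correct and follows essentially the same route as the paper: both apply the Lindeberg--L\'evy CLT to the i.i.d.\ sequence $D_iZ_i$, compute its mean $\pi_1$ and variance $\pi_1^2(m_4-1)+\sigma_\eta^2$, and conclude. Your treatment is in fact more careful than the paper's, which simply writes $D_iZ_i \sim (\pi_1,\,\pi_1^2(m_4-1)+\sigma_\eta^2)$ and says ``rearranging terms gives the result''; you explicitly handle the random denominator $\frac{1}{n}\sum_i Z_i^2$ via Slutsky and spell out the variance decomposition, both of which the paper elides.
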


\begin{proof}
	When $Z_i$ is also stochastic, the random variable $D_iZ_i$ looks like
	\[
	D_iZ_i = \pi_1 Z_i^2 + Z_i \eta_i \sim \left(\pi_1, \pi_1^2 (m_4 - 1) + \sigma_{\eta}^2\right).
	\]
	
	Here we may apply the classic Lindeberg-Levy CLT to get
	\[
	\frac{\sqrt{n} \left(\frac{\sum_i D_iZ_i}{n} - \pi_1\right)}{\sqrt{\pi_1^2 (m_4 - 1) + \sigma_{\eta}^2}} \stackrel{d}{\rightarrow} \mathcal{N} \left(0, 1\right).
	\]
	Rearranging terms gives us the desired result. 
\end{proof}

Observe that the results under Assumptions \ref{assumption-constant} and \ref{assumption-stochastic} are different! Indeed the variance under full stochasticity is larger than the previous case, because there is more randomness to account for.\\

\begin{theorem}[Consistency of the ridge estimator]
	Let $\hat{\beta}_{\text{ridge}}$ be the ridge estimator as defined in Equation (\ref{eqn-ridge-defn}). Suppose the instrument is not totally irrelevant, i.e. $\pi_1 \neq 0$. Then under either Assumption \ref{assumption-constant} or Assumption \ref{assumption-stochastic}, and for a sequence of penalty parameters $\lambda_{n} = o(n)$, the estimator is consistent. That is, 
	\[
	\hat{\beta}_{\text{ridge}} \stackrel{p}{\rightarrow}\beta_1.
	\]
\end{theorem}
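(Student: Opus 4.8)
The plan is to show consistency by analyzing the numerator and denominator of $\hat{\beta}_{\text{ridge}}$ separately using the continuous mapping theorem, since the estimator in Equation (\ref{eqn-ridge-defn}) is a ratio of two sample covariances with a vanishing perturbation in the denominator. First I would establish the probability limits of the two pieces. For the numerator, $\hat{\text{Cov}}[Y_i, Z_i]$, I would substitute the structural equation $Y_i = \beta_0 + \beta_1 D_i + \epsilon_i$ to write $\hat{\text{Cov}}[Y_i, Z_i] = \beta_1 \hat{\text{Cov}}[D_i, Z_i] + \hat{\text{Cov}}[\epsilon_i, Z_i]$. Since $Z_i$ is exogenous (independent of $\epsilon_i$), a law of large numbers gives $\hat{\text{Cov}}[\epsilon_i, Z_i] \stackrel{p}{\rightarrow} 0$, so the numerator converges to $\beta_1 \cdot \text{plim}\, \hat{\text{Cov}}[D_i, Z_i]$.

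Next I would handle the denominator. Using the first-stage equation $D_i = \pi_0 + \pi_1 Z_i + \eta_i$, I would show that $\hat{\text{Cov}}[D_i, Z_i] \stackrel{p}{\rightarrow} \pi_1 \cdot \text{plim}\, \widehat{\text{Var}}[Z_i]$, which under the normalization in Assumptions \ref{assumption-constant} and \ref{assumption-stochastic} (variance converging to $1$) equals $\pi_1$. The crucial observation is the penalty term $\frac{\lambda_n}{n}$: since $\lambda_n = o(n)$, we have $\frac{\lambda_n}{n} \rightarrow 0$, so the added bias vanishes asymptotically and the denominator converges to the same limit $\pi_1$ as the unpenalized 2SLS denominator. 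This is exactly where the rate condition $\lambda_n = o(n)$ does its work.

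Combining these, both numerator and denominator converge in probability, and since the instrument is relevant ($\pi_1 \neq 0$), the limiting denominator $\pi_1$ is nonzero. I would then invoke the continuous mapping theorem (equivalently, Slutsky's theorem for the ratio) to conclude
\[
\hat{\beta}_{\text{ridge}} = \frac{\hat{\text{Cov}}[Y_i, Z_i]}{\hat{\text{Cov}}[D_i, Z_i] + \frac{\lambda_n}{n}} \stackrel{p}{\rightarrow} \frac{\beta_1 \pi_1}{\pi_1} = \beta_1.
\]
Because the argument rests only on laws of large numbers for sample moments and not on the distributional CLT results, it goes through identically under either Assumption \ref{assumption-constant} or Assumption \ref{assumption-stochastic}, differing only in how the probability limits of the second moments of $Z_i$ are justified.

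I expect the main obstacle to be mild but worth stating carefully: ensuring that the denominator does not collapse to zero in the limit. The whole motivation for ridge IV is the instability of dividing by $b_n \rightarrow 0$, so the consistency proof must make explicit that \emph{under a fixed, relevant} $\pi_1 \neq 0$ the denominator converges to the nonzero constant $\pi_1$, and that the $o(n)$ penalty is asymptotically negligible rather than a fixed perturbation. The delicate contrast — which this theorem is setting up against the later local-asymptotic (weak-instrument) analysis in Section \ref{sec-staiger-stock} — is that consistency here depends on $\pi_1$ being a fixed nonzero constant; if $\pi_1$ were modeled as drifting toward zero with $n$, the limit of the denominator would no longer be bounded away from zero and this clean argument would break down.
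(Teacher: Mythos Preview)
Your proposal is correct and follows essentially the same approach as the paper: both arguments compute the probability limits of the numerator and denominator separately, use $\lambda_n/n \to 0$ so the penalty is asymptotically negligible, and then invoke $\pi_1 \neq 0$ to justify taking the ratio. The only cosmetic difference is that the paper substitutes the first-stage into the structural equation to obtain the reduced form $Y_i = (\beta_0+\beta_1\pi_0) + \beta_1\pi_1 Z_i + (\beta_1\eta_i+\epsilon_i)$ and reads off $\hat{\text{Cov}}[Y_i,Z_i]\stackrel{p}{\to}\beta_1\pi_1$ directly, whereas you decompose $\hat{\text{Cov}}[Y_i,Z_i]=\beta_1\hat{\text{Cov}}[D_i,Z_i]+\hat{\text{Cov}}[\epsilon_i,Z_i]$ first; the content is identical.
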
 

\begin{proof}
We can rewrite the structural equation in the data generating process as 
\begin{align}
Y_{i} = \beta_{0}+\beta_{1}\left(\pi_{0}+\pi_{1}Z_{i}+\eta_{i}\right)+\epsilon_{i} \notag \\
= \left(\beta_{0}+\beta_{1}\pi_{0}\right)+\beta_{1}\pi_{1}Z_{i}+\left(\beta_{1}\eta_{i}+\epsilon_{i}\right). \notag
\end{align}

This is just the reduced form equation. Without loss of generality, we have assumed $\sum_iZ_i^2/n\rightarrow 1$ (or $\text{Var}[Z_i] = 1$, if using Assumption \ref{assumption-stochastic}). Then it is clear that 
\[
\hat{\text{Cov}}[Y_{i},Z_{i}] \stackrel{p}{\rightarrow} \beta_{1}\pi_{1}.
\]
Similarly, the first stage gives us  
\[
\hat{\text{Cov}}[D_{i},Z_{i}] \stackrel{p}{\rightarrow} \pi_{1}.
\]
Since $\pi_1 \neq0$, and we have $\frac{\lambda_n}{n} \rightarrow 0$, convergence in probability allows us to take the ratio of these two estimators, and we get our result.

\end{proof}

The natural next question is, what is the asymptotic distribution of the proposed estimator? We tackle this in the next theorem. 

\begin{theorem}[Asymptotic normality of the ridge estimator---Version 1]
	\label{thm-ridge-normality1}
	Suppose $\lambda_n = o(\sqrt{n})$. Then the ridge estimator, as defined in Equation (\ref{eqn-ridge-defn}), is asymptotically normal. That is,
	\[
	\sqrt{n}\left(\hat{\beta}_{\text{ridge}} - \beta_1 \right) \rightarrow \mathcal{N}(0, V_{\text{ridge}}).
	\]
	Further, under Assumption \ref{assumption-constant}, $V_\text{ridge} = \sigma_{\epsilon}^2 / \pi_1^2$.
\end{theorem}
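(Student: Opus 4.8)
The plan is to write the centered, $\sqrt{n}$-scaled estimator as a single ratio and then invoke Slutsky's theorem, having first reduced the numerator to one clean central limit theorem. Writing $\hat{a} = \hat{\text{Cov}}[Y_i,Z_i]$ and $\hat{b} = \hat{\text{Cov}}[D_i,Z_i]$, the identity $\hat{\beta}_{\text{ridge}} - \beta_1 = \bigl(\hat{a} - \beta_1(\hat{b} + \lambda_n/n)\bigr)/(\hat{b} + \lambda_n/n)$ gives
\[
\sqrt{n}\left(\hat{\beta}_{\text{ridge}} - \beta_1\right) = \frac{\sqrt{n}\left(\hat{a} - \beta_1 \hat{b}\right) - \beta_1 \lambda_n / \sqrt{n}}{\hat{b} + \lambda_n/n}.
\]
I would then analyze the three pieces of this expression separately.

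The decisive algebraic step, which lets me avoid tracking the joint distribution of $(\hat{a}, \hat{b})$, is to notice that the combination $\hat{a} - \beta_1 \hat{b}$ collapses onto the structural error alone. Because $Y_i - \beta_1 D_i = \beta_0 + \epsilon_i$, the demeaned cross moments combine as $\hat{a} - \beta_1 \hat{b} = \frac{1}{n}\sum_i \bigl[(Y_i - \beta_1 D_i) - \overline{(Y - \beta_1 D)}\bigr](Z_i - \bar{Z}) = \frac{1}{n}\sum_i (\epsilon_i - \bar\epsilon)(Z_i - \bar{Z})$, with the intercept $\beta_0$ cancelling against the centering. Since $\sum_i (Z_i - \bar{Z}) = 0$, this equals $\frac{1}{n}\sum_i \epsilon_i (Z_i - \bar{Z})$, so the numerator is driven by the single mean-zero average $\frac{1}{\sqrt{n}}\sum_i \epsilon_i (Z_i - \bar{Z})$. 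I would apply the Liapounov CLT to this exactly as in Lemma 1: under Assumption \ref{assumption-constant} each summand $\epsilon_i Z_i$ has variance $Z_i^2 \sigma_\epsilon^2$, and the normalization $\frac{1}{n}\sum_i Z_i^2 \to 1$ delivers $\sqrt{n}(\hat{a} - \beta_1 \hat{b}) \stackrel{d}{\rightarrow} \mathcal{N}(0, \sigma_\epsilon^2)$.

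The remaining two pieces are then routine. This is exactly where the sharpened rate $\lambda_n = o(\sqrt{n})$, stronger than the $o(n)$ of the consistency theorem, earns its keep: it forces $\beta_1 \lambda_n/\sqrt{n} \to 0$, so the penalty injects no first-order bias at the $\sqrt{n}$ scale. For the denominator, the consistency argument gives $\hat{b} \stackrel{p}{\rightarrow} \pi_1$ and $\lambda_n/n \to 0$, so $\hat{b} + \lambda_n/n \stackrel{p}{\rightarrow} \pi_1 \neq 0$. Assembling the three limits through Slutsky's theorem yields
\[
\sqrt{n}\left(\hat{\beta}_{\text{ridge}} - \beta_1\right) \stackrel{d}{\rightarrow} \frac{\mathcal{N}(0,\sigma_\epsilon^2)}{\pi_1} = \mathcal{N}\!\left(0, \frac{\sigma_\epsilon^2}{\pi_1^2}\right),
\]
which is the claimed $V_{\text{ridge}} = \sigma_\epsilon^2/\pi_1^2$.

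The main obstacle is conceptual rather than computational: the temptation is to apply the multivariate delta method to the ratio $\hat{a}/(\hat{b}+\lambda_n/n)$ directly, which would require the joint limiting covariance of $\hat{a}$ and $\hat{b}$, whose disturbances are entangled through the reduced-form error $\beta_1 \eta_i + \epsilon_i$. Forming the linear combination $\hat{a} - \beta_1 \hat{b}$ at the outset dissolves this difficulty completely. The only detail needing care is the centering: under Assumption \ref{assumption-constant} the terms involving $\bar\epsilon$ vanish identically (since $\sum_i(Z_i - \bar{Z})=0$), and those involving $\bar{Z}$ are lower order provided the instrument is centered so that the limiting second moment $\frac{1}{n}\sum_i Z_i^2 \to 1$ coincides with its variance; one should also verify the Liapounov moment condition for the triangular array $\{\epsilon_i(Z_i - \bar{Z})\}$, which holds under the maintained moment assumptions.
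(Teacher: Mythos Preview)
Your proof is correct but takes a different route from the paper. The paper does precisely what you label ``the temptation'': it first establishes the joint CLT for $(\hat a,\hat b)$ with the full covariance matrix $\Sigma$ (including the cross term $\beta_1\sigma_\eta^2+\sigma_{\epsilon\eta}^2$), absorbs $\lambda_n/n$ into the second coordinate via Slutsky, and then applies the multivariate delta method to $h(x,y)=x/y$, simplifying the quadratic form $\nabla h^{\mathsf T}\Sigma\,\nabla h$ down to $\sigma_\epsilon^2/\pi_1^2$. Your decomposition $\hat a-\beta_1\hat b=\tfrac{1}{n}\sum_i\epsilon_i(Z_i-\bar Z)$ short-circuits all of this: the cancellation of $\sigma_\eta^2$ and $\sigma_{\epsilon\eta}$ is transparent rather than the output of matrix algebra, and only a univariate CLT plus Slutsky is needed. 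The paper's longer route is not wasted effort, though: the explicit $\Sigma$ and gradient are reused verbatim in the subsequent $\lambda_n=O(\sqrt n)$ theorem (to read off the asymptotic bias $-\beta_1\lambda_0/\pi_1$), in the Version~2 theorem under stochastic instruments, and in the Staiger--Stock analysis, so the joint-distribution machinery serves as common scaffolding for the rest of the paper.
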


\begin{proof}
	First, we examine the reduced form regression. We understand that from the Central Limit Theorem of OLS in Lemma \ref{assumption-constant}, we have 
	\[
	\sqrt{n}\left(\hat{\text{Cov}}[Y_{i},Z_{i}]-\beta_{1}\pi_{1}\right)\rightarrow\mathcal{N}\left(0,\sigma_{\text{red}}^{2}\right),
	\]
	where $\sigma_{\text{red}}^{2}$ is the homoskedastic error variance of the reduced form regression. That is, it is the variance of the $\left(\beta_{1}\eta_{i}+\epsilon_{i}\right)$ residual term. Similarly, analyzing the first-stage regression, we have 
	
	\[
	\sqrt{n}\left(\hat{\text{Cov}}[D_{i},Z_{i}]-\pi_{1}\right)\rightarrow\mathcal{N}\left(0,\sigma_{\eta}^{2}\right),
	\]
	for $\sigma_{\eta}^{2}$, the variance of $\eta_{i}$, the errors in the first stage.\\
	
	Next, we have $\text{Cov}[Y_i Z_i, D_i Z_i] = Z_i^2 \text{Cov}[Y_i, D_i] = Z_i^2\left( \beta_1 \sigma_{\eta}^2 + \sigma_{\epsilon \eta}^2 \right)$, where $\sigma_{\epsilon \eta}^2$ is the covariance between $\epsilon_{i}$ and $\eta_i$. Putting these results together, we get the multivariate result
	\[
	\sqrt{n}\left(\left(\begin{array}{c}
	\hat{\text{Cov}}[Y_{i},Z_{i}]\\
	\hat{\text{Cov}}[D_{i},Z_{i}]
	\end{array}\right)-\left(\begin{array}{c}
	\beta_{1}\pi_{1}\\
	\pi_{1}
	\end{array}\right)\right)\rightarrow\mathcal{N}\left(\left(\begin{array}{c}
	0\\
	0
	\end{array}\right),\left[\begin{array}{cc}
	\sigma_{\text{red}}^{2} & \beta_1 \sigma_{\eta}^2 + \sigma_{\epsilon \eta}^2\\
	\beta_1 \sigma_{\eta}^2 + \sigma_{\epsilon \eta}^2 & \sigma_{\eta}^{2}
	\end{array}\right]\right).
	\]
	
	Call this covariance matrix $\Sigma$. Because $\lambda_n = o(\sqrt{n})$, we can use Slutsky's theorem to get the same asymptotic distribution:
	\[
	\sqrt{n}\left(\left(\begin{array}{c}
	\hat{\text{Cov}}[Y_{i},Z_{i}]\\
	\hat{\text{Cov}}[D_{i},Z_{i}] + \frac{\lambda_n}{n}
	\end{array}\right)-\left(\begin{array}{c}
	\beta_{1}\pi_{1}\\
	\pi_{1}
	\end{array}\right)\right)\rightarrow\mathcal{N}\left(\left(\begin{array}{c}
	0\\
	0
	\end{array}\right),\Sigma \right).
	\]
	
	This is the asymptotic distribution of a bivariate estimator. We note that $\hat{\beta}_{\text{ridge}}$ is simply the ratio of the first and the second elements of this bivariate estimator. \\
	
	Then to obtain the distribution of the ridge estimator, we can use the multivariate delta method here. It says the ratio estimator converges to the ratio of individual probability limits. Consider the bivariate function $h(x, y) = x/y$. Its gradient is given by $ \nabla h(x,y) = \left(1/y \text{, } -x/y^2 \right)^T$. So the asymptotic variance of the ridge estimator is 
	\[
	\nabla h(\beta_1 \pi_1, \pi_1)^T \times \Sigma \times \nabla h(\beta_1 \pi_1, \pi_1)
	\]
	\[
	= \left( \begin{array}{cc}
	\frac{1}{\pi_1} & -\frac{\beta_1}{\pi_1}
	\end{array} \right) 
	\left[\begin{array}{cc}
	\sigma_{\text{red}}^{2} & \beta_1 \sigma_{\eta}^2 + \sigma_{\epsilon \eta}^2\\
	\beta_1 \sigma_{\eta}^2 + \sigma_{\epsilon \eta}^2 & \sigma_{\eta}^{2}
	\end{array}\right]
	\left( \begin{array}{c}
	1/\pi_1 \\
	-\beta_1/ \pi_1
	\end{array} \right).
	\]
	
	Here $\sigma_{\text{red}}^2 = \beta_1^2 \sigma_{\eta}^2 + \sigma_{\epsilon}^2 + 2\beta_1 \sigma_{\epsilon \eta}^2$. So the above is 
	\[
	= \left( \begin{array}{cc}
	\frac{1}{\pi_1} & -\frac{\beta_1}{\pi_1}
	\end{array} \right) 
	\left[\begin{array}{cc}
	\beta_1^2 \sigma_{\eta}^2 + \sigma_{\epsilon}^2 + 2\beta_1 \sigma_{\epsilon \eta}^2 & 
	\beta_1 \sigma_{\eta}^2 + \sigma_{\epsilon \eta}^2\\
	\beta_1 \sigma_{\eta}^2 + \sigma_{\epsilon \eta}^2 & \sigma_{\eta}^{2}
	\end{array}\right]
	\left( \begin{array}{c}
	1/\pi_1 \\
	-\beta_1/ \pi_1
	\end{array} \right),
	\]
	which simplifies to 
	\[
	= \frac{1}{\pi_1^2} \left[ \begin{array}{cc}
	\sigma_{\epsilon}^2 + \beta_1 \sigma_{\epsilon \eta}^2 &
	\sigma_{\epsilon \eta}^2
	\end{array} \right]
	\left( \begin{array}{c}
	1 \\
	-\beta_1
	\end{array} \right)
	\]
	
	\[
	= \frac{\sigma_{\epsilon}^2}{\pi_1^2}.
	\]
	This is the required $V_\text{ridge}$, and we have our asymptotic distribution.

\end{proof}

\begin{theorem}[Asymptotic normality of the ridge estimator---Version 2]
	\label{thm-ridge-normality2}
	Again suppose $\lambda_n = o(\sqrt{n})$. Under Assumption \ref{assumption-stochastic}, the ridge estimator is asymptotically normal as well, and with the  same asymptotic variance, $V_\text{ridge}$.
\end{theorem}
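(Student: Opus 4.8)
The plan is to run exactly the same three-step pipeline as in the proof of Theorem \ref{thm-ridge-normality1}---establish a joint CLT for the pair of sample covariances, use $\lambda_n = o(\sqrt{n})$ together with Slutsky's theorem to absorb the $\lambda_n/n$ shift in the denominator, and then apply the multivariate delta method to $h(x,y)=x/y$---with the only change being that the joint asymptotic covariance matrix is now computed under Assumption \ref{assumption-stochastic}. Concretely, I would invoke the second CLT for OLS (the one holding under Assumption \ref{assumption-stochastic}) in place of the first, which inflates each marginal variance by a term proportional to $m_4-1$.

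First I would recompute the joint covariance matrix, call it $\tilde{\Sigma}$. Applying the stochastic-$Z$ CLT to the first stage gives bottom-right entry $\pi_1^2(m_4-1)+\sigma_\eta^2$, and applying it to the reduced-form regression (whose slope is $\beta_1\pi_1$ and whose error variance is $\sigma_{\text{red}}^2 = \beta_1^2\sigma_\eta^2 + \sigma_\epsilon^2 + 2\beta_1\sigma_{\epsilon\eta}^2$) gives top-left entry $(\beta_1\pi_1)^2(m_4-1)+\sigma_{\text{red}}^2$. The off-diagonal entry is the real computation: I would expand $\mathrm{Cov}[Y_iZ_i,\,D_iZ_i]$ with $Y_iZ_i = \beta_1\pi_1 Z_i^2 + (\beta_1\eta_i+\epsilon_i)Z_i$ and $D_iZ_i = \pi_1 Z_i^2 + \eta_i Z_i$, using exogeneity ($Z_i$ independent of the errors), $E[Z_i]=0$, $E[Z_i^2]=1$, and $\mathrm{Var}[Z_i^2]=m_4-1$. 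The cross terms between $Z_i^2$ and $\eta_i Z_i$ vanish by mean-zero independence, the $Z_i^2\cdot Z_i^2$ term contributes $\beta_1\pi_1^2(m_4-1)$, and the error-product term contributes $\beta_1\sigma_\eta^2+\sigma_{\epsilon\eta}^2$, so the off-diagonal is $\beta_1\pi_1^2(m_4-1)+\beta_1\sigma_\eta^2+\sigma_{\epsilon\eta}^2$. Collecting these, I get the clean decomposition
\[
\tilde{\Sigma} = \Sigma + (m_4-1)\pi_1^2\, v v^{T}, \qquad v = \left(\beta_1,\ 1\right)^{T},
\]
where $\Sigma$ is precisely the Version~1 covariance matrix; the entire effect of full stochasticity is a single rank-one correction.

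The punchline, and the step I would highlight rather than the arithmetic, is the delta-method evaluation. The gradient is $\nabla h(\beta_1\pi_1,\pi_1) = (1/\pi_1,\ -\beta_1/\pi_1)^{T}$, and the new variance is $\nabla h^{T}\tilde{\Sigma}\,\nabla h = \nabla h^{T}\Sigma\,\nabla h + (m_4-1)\pi_1^2\big(v^{T}\nabla h\big)^2$. Since $v^{T}\nabla h = \beta_1/\pi_1 - \beta_1/\pi_1 = 0$, the rank-one correction is annihilated, and the variance collapses to the Version~1 value $\sigma_\epsilon^2/\pi_1^2$. I would emphasize the conceptual reason this must happen: the extra variance induced by random $Z_i^2$ scales the numerator and denominator proportionally---it points along the ray $(\beta_1\pi_1,\pi_1)$, which is exactly the level-set direction of the ratio $h(x,y)=x/y$---so a ratio that is invariant under common rescaling is, to first order, blind to it.

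The genuine obstacle is not the delta-method computation but justifying the joint CLT itself under Assumption \ref{assumption-stochastic}: I need a bivariate Lindeberg-Levy CLT for the i.i.d.\ vectors $(Y_iZ_i, D_iZ_i)$, which requires finite second moments of these products and hence a bounded fourth moment of $Z_i$---precisely what Assumption \ref{assumption-stochastic} supplies through $m_4$. Once the joint CLT and the covariance bookkeeping are in place, the Slutsky absorption of $\lambda_n/n$ and the final orthogonality argument carry through verbatim, and $V_{\text{ridge}} = \sigma_\epsilon^2/\pi_1^2$ is recovered.
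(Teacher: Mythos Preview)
Your proposal is correct and follows the same three-step architecture as the paper's proof: establish the joint CLT for $(\hat{\mathrm{Cov}}[Y_i,Z_i],\hat{\mathrm{Cov}}[D_i,Z_i])$ under Assumption~\ref{assumption-stochastic}, absorb $\lambda_n/n$ via Slutsky, then apply the delta method with $h(x,y)=x/y$. The paper computes each entry of the new covariance matrix by brute-force expansion of $\mathrm{Var}[Y_iZ_i]$, $\mathrm{Var}[D_iZ_i]$, and $\mathrm{Cov}[Y_iZ_i,D_iZ_i]$---arriving at exactly the entries you list---and then invokes the delta method without displaying the quadratic-form arithmetic, merely remarking that the answer is ``remarkably the same.'' Your rank-one decomposition $\tilde{\Sigma}=\Sigma+(m_4-1)\pi_1^2\,vv^{T}$ with $v=(\beta_1,1)^{T}$, together with the orthogonality $v^{T}\nabla h(\beta_1\pi_1,\pi_1)=0$, is a genuine refinement: it replaces an opaque calculation with a one-line structural explanation of \emph{why} the stochastic-$Z$ variance must coincide with the fixed-$Z$ one---the extra randomness from $Z_i^2$ lies entirely along the level-set direction of the ratio and is therefore invisible to first order. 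The paper's route buys nothing yours does not; yours buys insight.
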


\begin{proof}
	From a multivariate central limit theorem as in Lemma \ref{assumption-stochastic}, we again have asymptotic normality:
	\[
	\sqrt{n}\left(\left(\begin{array}{c}
	\hat{\text{Cov}}[Y_{i},Z_{i}]\\
	\hat{\text{Cov}}[D_{i},Z_{i}] 
	\end{array}\right)-\left(\begin{array}{c}
	\beta_{1}\pi_{1}\\
	\pi_{1}
	\end{array}\right)\right)\rightarrow\mathcal{N}\left(\left(\begin{array}{c}
	0\\
	0
	\end{array}\right),\Sigma \right).
	\]
	Note that the $\Sigma$ covariance matrix in this theorem is different from the one using Assumption \ref{assumption-constant}. Let us obtain it.\\
	
	We have $Y_i = \beta_1 \pi_1 Z_i + \beta_1 \eta_i + \epsilon_i$ and $D_i = \pi_1 Z_i + \eta_i.$ We ignore intercepts without loss of generality.\\
	Then,
	\[
	\text{Var}[Y_iZ_i] = E[Y_i^2 Z_i^2] - E[Y_i Z_i]^2
	\]
	\[
	= E\left[ (\beta_1 \pi_1 Z_i^2 + \beta_1 \eta_i Z_i + \epsilon_i Z_i)^2 \right] - (\beta_1 \pi_1)^2
	\]
	\[
	= E \left[ \beta_1^2 \pi_1^2 Z_i^4 + \beta_1^2 \eta_i^2 Z_i^2 + \epsilon_i^2 Z_i^2 + 2\beta_1^2 \pi_1 Z_i^3 \eta_i + 2\beta_1\eta_i \epsilon_i Z_i^2 + 2 \beta_1 \pi_1 Z_i^3 \epsilon_i  \right] - (\beta_1 \pi_1)^2
	\]
	
	\[
	= \beta_1^2 \pi_1^2 m_4 + \beta_1^2 \sigma_{\eta}^2 + \sigma_{\epsilon}^2 + 0 + 2\beta_1 \sigma_{\epsilon \eta}^2 + 0 - \beta_1^2 \pi_1^2
	\]
	
	\[
	= \beta_1^2 \pi_1^2 (m_4 - 1) + \beta_1^2 \sigma_{\eta}^2 + \sigma_{\epsilon}^2 + 2\beta_1 \sigma_{\epsilon \eta}^2.
	\]
	Next,
	\[
	\text{Var}[D_iZ_i] = \pi_1^2 (m_4 - 1) + \sigma_{\eta}^2
	\]
	from Lemma \ref{assumption-stochastic}. Finally,
	\[
	\text{Cov}[Y_iZ_i, D_iZ_i] = E[Y_iD_iZ_i^2] - E[Y_iZ_i]E[D_iZ_i]
	\]
	\[
	= E \left[ \left( \beta_1 \pi_1 Z_i + \beta_1 \eta_i + \epsilon_i \right) \left( \pi_1 Z_i + \eta_i \right)  Z_i^2 \right] - \beta_1 \pi_1^2
	\]
	\[
	= E \left[ \left( \beta_1 \pi_1^2 Z_i^2 + \beta_1 \pi_1 Z_i \eta_i + \beta_1 \pi_1 \eta_iZ_i + \beta_1 \eta_i^2 + \pi_1 \epsilon_i Z_i + \epsilon_i \eta_i \right) Z_i^2 \right] - \beta_1 \pi_1^2
	\]
	\[
	= \beta_1 \pi_1^2 m_4 + 0 + 0 + \beta_1 \sigma_{\eta}^2 + 0 + \sigma_{\epsilon \eta}^2 - \beta_1 \pi_1^2
	\]
	\[
	= \beta_1 \pi_1^2 (m_4 - 1) + \beta_1 \sigma_{\eta}^2 + \sigma_{\epsilon \eta}^2.
	\]
	From these it is clear that 
	\[
	\Sigma = \left( 
	\begin{array}{cc}
	\beta_1^2 \pi_1^2 (m_4 - 1) + \beta_1^2 \sigma_{\eta}^2 + \sigma_{\epsilon}^2 + 2\beta_1 \sigma_{\epsilon \eta}^2
	& \beta_1 \pi_1^2 (m_4 - 1) + \beta_1 \sigma_{\eta}^2 + \sigma_{\epsilon \eta}^2\\
	\beta_1 \pi_1^2 (m_4 - 1) + \beta_1 \sigma_{\eta}^2 + \sigma_{\epsilon \eta}^2 & 
	\pi_1^2 (m_4 - 1) + \sigma_{\eta}^2
	\end{array}
	\right).
	\]
	
	Using the multivariate delta method again as in the Theorem \ref{thm-ridge-normality1} on this new $\Delta$, we get the variance of the ridge estimator to be
	\[
	\frac{\sigma_{\epsilon}^2}{\pi_1^2},
	\]
	which is remarkably the same result.
	 
\end{proof}

We see that ridge IV recovers full efficiency in the ``large'' coefficient case. From another perspective, this result is uninteresting because ridge IV ``does nothing.'' The $\lambda_n$ rate being too slow, it yields the exact same distribution as the original 2SLS estimator without any penalization. To obtain a novel asymptotic distribution with ridge IV, we need a faster rate. We show that now.\\

To see what ridge IV is able to do, we set $\lambda_n = O(\sqrt{n})$. To be clear, in the previous theorems we use a sub-$\sqrt{n}$ rate, whereas now we use exactly a $\sqrt{n}$ rate for $\lambda_n$.  \cite{knight2000asymptotics} show that $\lambda_n = O(\sqrt{n})$ is necessary for unique $\sqrt{n}-$consistency of classic ridge regression. We show the same for ridge IV.

\begin{theorem}
	Let $\lambda_n = O(\sqrt{n})$, that is $\lambda_n / \sqrt{n} \rightarrow \lambda_0$, for some $\lambda_0 \geq 0$. Then 
	\[
	\sqrt{n} \left( \hat{\beta}_{\text{ridge}} - \beta_1 \right) \stackrel{d}{\rightarrow} \mathcal{N} \left( -\frac{\beta_1 \lambda_0}{\pi_1}, V_{\text{ridge}} \right).
	\]
\end{theorem}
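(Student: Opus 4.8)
The plan is to reduce this to a shifted-mean version of the delta-method argument already used in Theorem \ref{thm-ridge-normality1} and Theorem \ref{thm-ridge-normality2}. Write $a_n = \hat{\text{Cov}}[Y_i, Z_i]$ and $b_n = \hat{\text{Cov}}[D_i, Z_i]$, so that $\hat{\beta}_{\text{ridge}} = a_n / (b_n + \lambda_n/n)$, and define the penalized denominator $\tilde{b}_n = b_n + \lambda_n / n$. The essential observation is that the faster rate $\lambda_n = O(\sqrt{n})$ makes the penalty survive at exactly the $\sqrt{n}$ scale: since $\sqrt{n}\cdot(\lambda_n/n) = \lambda_n/\sqrt{n} \to \lambda_0$, the deterministic shift no longer vanishes as it did under $\lambda_n = o(\sqrt{n})$, and it will reappear as a first-order asymptotic bias.

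First I would establish the joint limit of the numerator and the \emph{penalized} denominator. From the multivariate CLT of the previous theorems (valid under either Assumption \ref{assumption-constant} or Assumption \ref{assumption-stochastic}), the unpenalized pair $(a_n, b_n)$ is jointly asymptotically normal with mean $(\beta_1\pi_1, \pi_1)$ and covariance $\Sigma$. Adding $\lambda_n/n$ to the denominator shifts only the centering of the second coordinate, because $\sqrt{n}(\tilde{b}_n - \pi_1) = \sqrt{n}(b_n - \pi_1) + \lambda_n/\sqrt{n}$, and since $\lambda_n/\sqrt{n} \to \lambda_0$ is a nonrandom constant, Slutsky's theorem yields
\[
\sqrt{n}\left( \begin{pmatrix} a_n \\ \tilde{b}_n \end{pmatrix} - \begin{pmatrix} \beta_1 \pi_1 \\ \pi_1 \end{pmatrix} \right) \stackrel{d}{\to} \mathcal{N}\left( \begin{pmatrix} 0 \\ \lambda_0 \end{pmatrix}, \Sigma \right).
\]
Note also that $\tilde{b}_n \stackrel{p}{\to} \pi_1 \neq 0$, since $\lambda_n/n \to 0$, so the ratio map $h(x,y) = x/y$ is smooth at the limit point.

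Next I would push this shifted normal through the multivariate delta method, which is the one place where care is genuinely required, since the textbook statement presumes a mean-zero limit. I would instead invoke the delta method in its general continuous-mapping form: whenever $\sqrt{n}(\theta_n - \theta) \stackrel{d}{\to} T$ for \emph{any} limit vector $T$, the first-order expansion $\sqrt{n}(h(\theta_n) - h(\theta)) = \nabla h(\theta)^T \sqrt{n}(\theta_n - \theta) + o_p(1)$ holds—the remainder vanishing because $\theta_n - \theta = O_p(1/\sqrt{n})$—so that $\sqrt{n}(h(\theta_n) - h(\theta)) \stackrel{d}{\to} \nabla h(\theta)^T T$. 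Here $T \sim \mathcal{N}((0,\lambda_0)^T, \Sigma)$, and with the gradient $\nabla h(\beta_1\pi_1, \pi_1) = (1/\pi_1, -\beta_1/\pi_1)^T$ already computed, linearity sends this to a scalar Gaussian with mean $\nabla h^T (0, \lambda_0)^T = -\beta_1\lambda_0/\pi_1$ and variance $\nabla h^T \Sigma \nabla h$.

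The substantive payoff is that the variance term is untouched by the shift: $\nabla h^T \Sigma \nabla h$ is precisely the $V_{\text{ridge}} = \sigma_\epsilon^2/\pi_1^2$ computed in the two preceding theorems, and is identical under both covariance matrices $\Sigma$. Thus the only effect of the $\sqrt{n}$-rate penalty is to introduce the asymptotic bias $-\beta_1\lambda_0/\pi_1$ while leaving the asymptotic spread exactly as in unpenalized 2SLS—which is the content that distinguishes this theorem from the $o(\sqrt{n})$ case. The main obstacle, then, is purely the bookkeeping of the non-centered delta method; once that is handled, the mean and variance drop out of the same linear-algebra computation performed earlier.
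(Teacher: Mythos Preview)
Your proposal is correct and follows essentially the same route as the paper: establish the joint CLT for $(a_n,\tilde b_n)$ with the shifted mean $(0,\lambda_0)^T$ via Slutsky, then apply the delta method with $h(x,y)=x/y$ at $(\beta_1\pi_1,\pi_1)$ to read off the bias $-\beta_1\lambda_0/\pi_1$ and the unchanged variance $V_{\text{ridge}}$. If anything, your treatment is slightly more careful than the paper's in explicitly justifying the non-centered delta method via the first-order expansion and in noting that the argument goes through under either sampling assumption.
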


\begin{proof}
	Let us look at the case with Assumption \ref{assumption-constant} first. The proof of Theorem \ref{thm-ridge-normality1} says
	\[
	\sqrt{n}\left(\left(\begin{array}{c}
	\hat{\text{Cov}}[Y_{i},Z_{i}]\\
	\hat{\text{Cov}}[D_{i},Z_{i}] + \frac{\lambda_n}{n}
	\end{array}\right)-\left(\begin{array}{c}
	\beta_{1}\pi_{1}\\
	\pi_{1}
	\end{array}\right)\right) \stackrel{d}{\rightarrow} \mathcal{N}\left(\left(\begin{array}{c}
	0\\
	\lambda_0
	\end{array}\right),\Sigma \right).
	\]
	The only thing that changes is the bias term with $\lambda_0$ because of the slower $O(\sqrt{n})$ rate of $\lambda_n$. Now take the ratio of the two elements of the vector and perform the delta method like in Theorem \ref{thm-ridge-normality1}. This gives us
	\[
	\sqrt{n} \left( \hat{\beta}_\text{ridge} - \beta_1 \right) \stackrel{d}{\rightarrow} \mathcal{N}\left(\nabla h(\beta_1\pi_1, \pi_1)^T \times \left(\begin{array}{c}
	0\\
	\lambda_0
	\end{array}\right), V_\text{ridge} \right),
	\]
	using the same notation as in the theorem \ref{thm-ridge-normality1}. This gives us our result.
	
\end{proof}

Under this new $\lambda_n$ regime, we see that we recover the same asymptotic distribution as the 2SLS estimator, but centered at the wrong mean! That is, the asymptotic bias is $-\frac{\beta_1 \lambda_0}{\pi_1}$, rather than 0. What then do we gain from the ridge approach?  The next section addresses this point.

\section{The Staiger-Stock critique}
\label{sec-staiger-stock}

Recall our main motivation for ridge IV is the weak instrument problem. To deal with it more explicitly, we take a local asymptotic framework used in \cite{staiger1997instrumental}. That is, we use
\[
\pi_1 = \frac{c}{\sqrt{n}}.
\]
This is to show that the strength of the first-stage is small, even relative to the sample size and the problem does not go away with bigger samples. Our next theorem shows the behavior of conventional 2SLS under this sense of weak instruments. 

\begin{theorem}
	\label{thm-2sls-staiger}
	Suppose our first stage is weak in the Staiger-Stock sense. That is, $\pi_1 = c/\sqrt{n}$. Then the 2SLS estimator is unstable and diverges. Specifically, $\hat{\beta}_\text{2SLS}$ converges to a Cauchy distribution, and so $\sqrt{n} \left( \hat{\beta}_\text{2SLS} - \beta_{1} \right)$ diverges.
\end{theorem}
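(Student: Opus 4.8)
The plan is to work directly with the ratio representation $\hat{\beta}_{2SLS} = a_n / b_n$, where $a_n = \hat{\text{Cov}}[Y_i,Z_i]$ and $b_n = \hat{\text{Cov}}[D_i,Z_i]$, and to track the numerator and denominator separately once $\pi_1 = c/\sqrt{n}$ is substituted in. The crucial observation is that the Staiger-Stock rate is calibrated exactly to the $\sqrt{n}$ rate of the CLT. Writing $\sqrt{n}\,a_n = \sqrt{n}(a_n - \beta_1\pi_1) + \sqrt{n}\,\beta_1\pi_1$ and using $\sqrt{n}\,\pi_1 = c$, the deterministic centering collapses to the constant $\beta_1 c$ instead of growing; likewise $\sqrt{n}\,b_n = \sqrt{n}(b_n - \pi_1) + c$. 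So, unlike the strong-instrument case where the denominator converges to the nonzero constant $\pi_1$, here the rescaled denominator converges to a nondegenerate random variable with mean only $c$, and it is this that destroys stability.

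First I would invoke the joint CLT established in the proof of Theorem \ref{thm-ridge-normality1} (taken with $\lambda_n = 0$, i.e. plain 2SLS), which gives bivariate convergence of $\sqrt{n}\big((a_n,b_n) - (\beta_1\pi_1,\pi_1)\big)$ to a mean-zero normal with covariance $\Sigma$. A small point worth flagging is that $\pi_1 = c/\sqrt{n}$ makes this a triangular-array statement; it still goes through because the local parameter enters only the (now constant) centering, leaving the limiting covariance unchanged under Assumption \ref{assumption-constant} and altering it only through vanishing $O(1/n)$ terms under Assumption \ref{assumption-stochastic}. Combining with the constant shifts above yields
\[
\left(\sqrt{n}\,a_n,\ \sqrt{n}\,b_n\right) \stackrel{d}{\rightarrow} (U,V),
\]
where $(U,V)$ is bivariate normal with mean $(\beta_1 c,\, c)$ and covariance $\Sigma$. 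Since $\hat{\beta}_{2SLS} = \sqrt{n}\,a_n / \sqrt{n}\,b_n$, I would then push this through the map $h(x,y) = x/y$.

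The main obstacle is that $h(x,y) = x/y$ is discontinuous on the line $y = 0$, so the continuous mapping theorem does not apply off the shelf. The resolution is that the marginal limit law of the denominator $V$ is Gaussian with a continuous density, so $P(V = 0) = 0$: the discontinuity set has probability zero under the limiting measure, which is precisely the hypothesis needed for the continuous mapping theorem, giving $\hat{\beta}_{2SLS} \stackrel{d}{\rightarrow} U/V$. The limit $U/V$ is a ratio of jointly normal variables; when the instrument is fully irrelevant ($c=0$) and $U,V$ are uncorrelated with equal variance it is exactly standard Cauchy, while in the general correlated, nonzero-mean case it is the Hinkley ratio-of-normals distribution. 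In either case the tails are Cauchy-like with no finite moments, the heaviness arising exactly because $V$ puts positive density arbitrarily close to $0$; I would record the limit as this nondegenerate, mean-undefined law.

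Finally, for the divergence claim I would argue that because $\hat{\beta}_{2SLS} \stackrel{d}{\rightarrow} U/V$, a continuous and hence nondegenerate limit, the centered quantity satisfies $P(U/V = \beta_1) = 0$. Then for any fixed $M>0$ and any continuity point $\delta>0$,
\[
\liminf_{n\to\infty} P\!\left(\left|\sqrt{n}(\hat{\beta}_{2SLS}-\beta_1)\right| > M\right) = \liminf_{n\to\infty} P\!\left(\left|\hat{\beta}_{2SLS}-\beta_1\right| > M/\sqrt{n}\right) \ge P\!\left(\left|U/V-\beta_1\right| \ge \delta\right),
\]
since eventually $M/\sqrt{n} < \delta$. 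Letting $\delta \downarrow 0$ drives the right-hand side to $P(|U/V-\beta_1|>0) = 1$, so $\sqrt{n}(\hat{\beta}_{2SLS}-\beta_1)$ admits no proper limiting law and diverges in probability — in sharp contrast to the bounded ridge estimator. The only care required is the two-step bound (fix $\delta$, then send $\delta\to 0$), needed because both the statistic and its threshold move with $n$, and this is legitimate precisely because $U/V$ has a density.
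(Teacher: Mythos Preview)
Your proof follows essentially the same route as the paper: establish joint asymptotic normality of $(\sqrt{n}\,\hat{\text{Cov}}[Y_i,Z_i],\ \sqrt{n}\,\hat{\text{Cov}}[D_i,Z_i])$ with mean $(c\beta_1,\,c)$ and covariance $\Sigma$, then pass to the ratio. You are in fact more careful than the paper on several technical points---flagging the triangular-array nature of the CLT, verifying the continuous-mapping hypothesis $P(V=0)=0$, distinguishing the exact Cauchy case from the general Hinkley ratio-of-normals law, and giving an explicit argument for divergence---where the paper remains informal.
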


\begin{proof}
	We operate under Assumption \ref{assumption-constant} here. We have 
	\[
	D_i = \pi_1 Z_i + \eta_i.
	\]
	Consider $D_i Z_i$. Since $\pi_1$ varies with sample size, $n$, we employ a triangular array argument here. $D_i Z_i$ has mean $\mu_{ni} = \pi_1 Z_i^2$ and variance $\sigma_{ni}^2 = Z_i^2 \sigma_{\eta}^2$.\\
	
	Let $T_n = \sum_{i=1}^n (D_i Z_i - \mu_{ni})$ and $s_n^2 = \text{Var}[T_n] = \sum_{i=1}^n \sigma_{ni}^2$. The Lindeberg-Feller Central Limit Theorem states
	\[
	\frac{T_n}{s_n} \stackrel{d}{\rightarrow} \mathcal{N}(0,1).
	\]
	Applying this to our setting, we get
	\[
	\frac{\sum_{i=1}^n \left( D_i Z_i - \pi_1 Z_i^2 \right) }{\sigma_{\eta} \sqrt{\sum_{i=1}^n Z_i^2}} \stackrel{d}{\rightarrow} \mathcal{N}(0,1).
	\]
	We have $\sum_{i=1}^n Z_i^2 / n \rightarrow 1$. So this can be rewritten as
	\[
	\sqrt{n} \left( \sum_{i=1}^n \frac{D_i Z_i - \pi_1 Z_i^2}{n} \right) \stackrel{d}{\rightarrow} \mathcal{N}(0, \sigma_{\eta}^2).
	\]
	Further, $\pi_1 = c/\sqrt{n}$. So this further simplifies to
	\[
	\sqrt{n} \sum_{i=1}^n \frac{D_i Z_i}{n} \stackrel{d}{\rightarrow} \mathcal{N}(c, \sigma_{\eta}^2).
	\]
	
	Applying the Lindeberg-Feller CLT similarly to the reduced form equation, we get
	\[
	\frac{\sum_{i=1}^n \left( Y_i Z_i - \beta_1 \pi_1 Z_i^2 \right) }{\sigma_{\text{red}} \sqrt{\sum_{i} Z_i^2}} \stackrel{d}{\rightarrow} \mathcal{N}(0,1).
	\]
	And by similar logic, this can be rewritten as
	\[
	\sqrt{n} \sum_{i=1}^n \frac{Y_i Z_i}{n} \stackrel{d}{\rightarrow} \mathcal{N}(c\beta_{1}, \sigma_{\text{red}}^2).
	\]
	
	Following the derivation of the covariance term in Theorem \ref{thm-ridge-normality1}, we can then were a joint normality result as follows:
	\[
	\frac{1}{\sqrt{n}} \sum_{i=1}^n \left( \begin{array}{c}
	Y_i Z_i - \beta_1 \pi_1 Z_i^2\\
	D_i Z_i - \pi_1 Z_i^2
	\end{array} \right) \stackrel{d}{\rightarrow} \mathcal{N} \left( \left( \begin{array}{c}
	0\\
	0
	\end{array} \right), 
	\left[\begin{array}{cc}
	\sigma_{\text{red}}^{2} & \beta_1 \sigma_{\eta}^2 + \sigma_{\epsilon \eta}^2\\
	\beta_1 \sigma_{\eta}^2 + \sigma_{\epsilon \eta}^2 & \sigma_{\eta}^{2}
	\end{array}\right] \right).
	\]	
	
	We called this covariance matrix $\Sigma$. With this notation, we can rewrite the joint normality as
	\[
	\sqrt{n} \left( \begin{array}{c}
	\hat{\text{Cov}}[Y_i, Z_i]\\
	\hat{\text{Cov}}[D_i, Z_i]
	\end{array} \right) \stackrel{d}{\rightarrow} \mathcal{N} \left( \left( \begin{array}{c}
	c\beta_1\\
	c
	\end{array} \right), \Sigma \right).
	\]
	
	Our 2SLS estimator is 
	\[
	\frac{\hat{\text{Cov}}[Y_i, Z_i]}{\hat{\text{Cov}}[D_i, Z_i]},
	\]
	so we can get its distribution by taking the ratio of the above joint normality distribution, which would result in a Cauchy distribution. Given that the ratio itself has a distribution, $\sqrt{n}$ times its difference from its mean would be unstable and diverges.
	
\end{proof}

How does ridge IV help with this? We demonstrate that in the next theorem, which is the most important result of the paper.
\begin{theorem}
	Let $\lambda_{n} = O(n)$. That is, $\lambda_{n} / n \rightarrow \lambda_0$ for some $ \lambda_0 \geq 0.$ Then, under Staiger-Stock asymptotics, we have
	\[
	\sqrt{n} \hat{\beta}_\text{ridge}  \stackrel{d}{\rightarrow} \mathcal{N} \left( \frac{c\beta_{1}}{\lambda_0}, \frac{\sigma_{\text{red}}^2}{\lambda_0^2} \right).
	\]
\end{theorem}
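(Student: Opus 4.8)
The plan is to reduce the statement to a single application of Slutsky's theorem, exploiting the fact that the $O(n)$ penalty rate is exactly what keeps the denominator of the ridge estimator bounded away from zero in the limit. Write $a_n = \hat{\text{Cov}}[Y_i, Z_i]$ and $b_n = \hat{\text{Cov}}[D_i, Z_i]$, so that $\hat{\beta}_{\text{ridge}} = a_n / (b_n + \lambda_n/n)$ and hence $\sqrt{n}\,\hat{\beta}_{\text{ridge}} = (\sqrt{n}\,a_n)/(b_n + \lambda_n/n)$. Under Staiger-Stock asymptotics with Assumption \ref{assumption-constant}, the joint central limit theorem established in the proof of Theorem \ref{thm-2sls-staiger} gives
\[
\sqrt{n}\left(\begin{array}{c} a_n \\ b_n \end{array}\right) \stackrel{d}{\rightarrow} \mathcal{N}\left(\left(\begin{array}{c} c\beta_1 \\ c \end{array}\right), \Sigma\right),
\]
with the top-left entry of $\Sigma$ equal to $\sigma_{\text{red}}^2$. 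In particular the marginal numerator satisfies $\sqrt{n}\,a_n \stackrel{d}{\rightarrow} \mathcal{N}(c\beta_1, \sigma_{\text{red}}^2)$.

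The key steps are then as follows. First, since $\sqrt{n}\,b_n = O_p(1)$, the raw sample covariance is $b_n = O_p(1/\sqrt{n}) \stackrel{p}{\rightarrow} 0$. Second, by hypothesis $\lambda_n/n \rightarrow \lambda_0$, so the denominator stabilizes: $b_n + \lambda_n/n \stackrel{p}{\rightarrow} \lambda_0$. Third, combining the convergence in distribution of the numerator with the convergence in probability of the denominator, Slutsky's theorem yields
\[
\sqrt{n}\,\hat{\beta}_{\text{ridge}} = \frac{\sqrt{n}\,a_n}{b_n + \lambda_n/n} \stackrel{d}{\rightarrow} \frac{1}{\lambda_0}\,\mathcal{N}(c\beta_1, \sigma_{\text{red}}^2) = \mathcal{N}\left(\frac{c\beta_1}{\lambda_0}, \frac{\sigma_{\text{red}}^2}{\lambda_0^2}\right),
\]
which is the claimed limit.

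The conceptual heart of the argument---and the point to emphasize rather than any technical obstacle---is the contrast with Theorem \ref{thm-2sls-staiger}. There both $a_n$ and $b_n$ are of order $1/\sqrt{n}$, so the estimator is a ratio of two jointly normal variables with nonzero, comparable scales, producing the Cauchy blow-up. The $O(n)$ penalty adds a deterministic term $\lambda_n/n \rightarrow \lambda_0 > 0$ to the denominator, which dominates the vanishing $b_n$ and pins the denominator to a constant; this is precisely what converts the unstable Cauchy ratio into a well-behaved normal limit at the $\sqrt{n}$ rate. Two caveats deserve mention: the result requires $\lambda_0 > 0$ (at $\lambda_0 = 0$ the denominator again collapses and we fall back to the Cauchy case), and although the argument above is stated under Assumption \ref{assumption-constant}, the stochastic case of Assumption \ref{assumption-stochastic} follows identically, because under $\pi_1 = c/\sqrt{n}$ the extra fourth-moment contribution $\beta_1^2 \pi_1^2(m_4 - 1)$ to the variance of $Y_i Z_i$ is $O(1/n)$ and washes out in the limit, leaving the same $\sigma_{\text{red}}^2$.
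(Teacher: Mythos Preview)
Your proof is correct and follows essentially the same route as the paper: invoke the limits from Theorem~\ref{thm-2sls-staiger} to get $\sqrt{n}\,a_n \stackrel{d}{\rightarrow} \mathcal{N}(c\beta_1,\sigma_{\text{red}}^2)$ and $b_n + \lambda_n/n \stackrel{p}{\rightarrow} \lambda_0$, then apply Slutsky. Your added remarks on the $\lambda_0>0$ requirement and the irrelevance of the extra $\pi_1^2(m_4-1)$ term under Assumption~\ref{assumption-stochastic} are accurate and go slightly beyond the paper's own proof.
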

\begin{proof}
	The ridge IV estimator is 
	\[
	\frac{\hat{\text{Cov}}[Y_i, Z_i]}{\hat{\text{Cov}}[D_i, Z_i] + \frac{\lambda_{n}}{n}}.
	\]
	
	From Theorem \ref{thm-2sls-staiger}, we know that $\hat{\text{Cov}}[D_i, Z_i] = O_p(1/\sqrt{n})$. This implies
	\[
	\hat{\text{Cov}}[D_i, Z_i] \stackrel{p}{\rightarrow} 0,
	\]
	and from this it follows that 
	\[
	\hat{\text{Cov}}[D_i, Z_i] + \frac{\lambda_{n}}{n} \stackrel{p}{\rightarrow} \lambda_0.
	\]
	Also from Theorem \ref{thm-2sls-staiger}, we know 
	\[
	\sqrt{n} \hat{\text{Cov}}[Y_i, Z_i] \stackrel{d}{\rightarrow} \mathcal{N}(c\beta_{1}, \sigma_{\text{red}}^2).
	\]
	Then taking the ratio of the above two results using Slutsky's theorem, we get
	\[
	\sqrt{n} \left( \frac{\hat{\text{Cov}}[Y_i, Z_i]}{\hat{\text{Cov}}[D_i, Z_i] + \frac{\lambda_{n}}{n}} \right) \stackrel{d}{\rightarrow} \frac{1}{\lambda_0} \mathcal{N}(c\beta_{1}, \sigma_{\text{red}}^2).
	\]
	This is the required result.
	
\end{proof}

In the Staiger-Stock regime, ridge IV with an aggressive enough penalization scheme ($\lambda_{n} = O(n)$) massively lowers mean squared error of the estimate. In regimes where instruments are not weak, we can still use ridge IV with a more moderate penalization scheme ($\lambda_{n} = o(\sqrt{n})$), and lose nothing, although in this case, the choice of ridge IV over 2SLS is superfluous.

\section{Understanding ridge IV}
\label{sec-interpretation}

\subsection{Interpretation of $\lambda$ as the Lagrange multiplier in a constrained optimization problem}

In the simplest case, as in our model in (\ref{model}), ignoring intercepts, the standard 2SLS objective function is given by 
\[
\min_{\beta} \left( \sum_{i=1}^{n} Z_i (Y_i - D_i \beta) \right)^2.
\]

Then the ridge IV solution is given by the following objective function:
\begin{equation}
\label{ridge-iv-objective}
\min_{\beta} \left( \sum_{i=1}^{n} Z_i (Y_i - D_i \beta) \right)^2 + \gamma_n \beta^2.
\end{equation}

Clearly, setting penalization, $\lambda_n$ to zero recovers the original 2SLS estimator. What is the relation between $\lambda_n$ and the Lagrange multiplier $\gamma_n$ we see above? We address this in the next proposition.
\begin{proposition}[Objective function of ridge IV]
	\label{gamma-lambda-link}
	The objective function of ridge IV is as given in Equation \ref{ridge-iv-objective}. Further, there is a one-to-one relation between $\gamma_n$, the Lagrange multiplier in the objective function, and $\lambda_n$, the level of penalization in the ridge estimator, given by
	\[
	\gamma_n = \hat{\text{Cov}}[D_i, Z_i] \lambda_n.
	\]
\end{proposition}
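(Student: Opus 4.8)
The plan is to characterize the minimizer of the penalized objective in Equation (\ref{ridge-iv-objective}) via its first-order condition, and then match that minimizer term-by-term against the closed-form ridge IV estimator in Equation (\ref{eqn-ridge-defn}); this single computation settles both parts of the proposition at once. Ignoring intercepts as the setup permits, I would write $a_n = \sum_i Z_i Y_i$ and $b_n = \sum_i Z_i D_i$, so that the moment residual is $\sum_i Z_i(Y_i - D_i\beta) = a_n - b_n\beta$ and the objective becomes $(a_n - b_n \beta)^2 + \gamma_n \beta^2$. Expanding gives $a_n^2 - 2 a_n b_n \beta + (b_n^2 + \gamma_n)\beta^2$, a quadratic in $\beta$ with leading coefficient $b_n^2 + \gamma_n > 0$ for any $\gamma_n \geq 0$. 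Hence the objective is strictly convex and its unique global minimizer is pinned down by the first-order condition alone.

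Next I would differentiate and set the derivative to zero, $-2 b_n (a_n - b_n \beta) + 2\gamma_n \beta = 0$, which rearranges to $\beta(b_n^2 + \gamma_n) = a_n b_n$, so the minimizer is $\hat\beta = \frac{a_n b_n}{b_n^2 + \gamma_n}$. Dividing numerator and denominator by $b_n$ (assuming $b_n \neq 0$) puts this in the revealing form $\hat\beta = \frac{a_n}{b_n + \gamma_n / b_n}$, whose numerator already matches that of the ridge IV estimator.

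The final step is to match the denominator against the ridge IV estimator $\hat{\beta}_{\text{ridge}} = \frac{a_n}{b_n + \lambda_n}$ from Equation (\ref{eqn-ridge-defn}). The two coincide exactly when $\gamma_n / b_n = \lambda_n$, i.e. $\gamma_n = b_n \lambda_n = \hat{\text{Cov}}[D_i, Z_i]\,\lambda_n$ after identifying $b_n$ with the (un-centered) sample covariance. This establishes simultaneously that the minimizer of (\ref{ridge-iv-objective}) is the ridge IV estimator and that $\gamma_n$ and $\lambda_n$ obey the stated formula. Because $b_n = \hat{\text{Cov}}[D_i, Z_i]$ is a fixed nonzero number for a given sample, the map $\lambda_n \mapsto \gamma_n = b_n \lambda_n$ is linear and hence one-to-one, which is the remaining assertion.

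I expect the only real obstacle to be bookkeeping of the $n$-scaling between the sum-based objective and the covariance-based estimator: depending on whether the moments are written as sums $\sum_i Z_i(\cdot)$ or sample averages $\frac{1}{n}\sum_i Z_i(\cdot)$, and whether $\lambda_n$ multiplies $\hat{\text{Cov}}$ or the raw sum, the proportionality constant between $\gamma_n$ and $\lambda_n$ absorbs or sheds a factor of $n$. The structural content that $\gamma_n$ is proportional to $\hat{\text{Cov}}[D_i, Z_i]\,\lambda_n$ is robust to this choice, but I would fix one convention at the outset and carry it consistently so the stated constant matches. I would also flag the nondegeneracy requirement $\hat{\text{Cov}}[D_i, Z_i] \neq 0$, without which both the division by $b_n$ and the one-to-one claim fail.
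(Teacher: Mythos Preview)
Your proposal is correct and follows essentially the same route as the paper: verify convexity of the penalized GMM objective, solve the first-order condition to obtain $\hat\beta = a_n/(b_n + \gamma_n/b_n)$, and match denominators against the ridge IV definition to read off the $\gamma_n$--$\lambda_n$ relation. Your explicit flagging of the $n$-scaling bookkeeping and of the nondegeneracy condition $\hat{\mathrm{Cov}}[D_i,Z_i]\neq 0$ is a useful addition, since the paper's own matching step is somewhat loose on exactly this point.
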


\begin{proof}
	Let the ridge IV objective function be 
	\[
	\mathcal{L}(\beta) = \left( \sum_{i=1}^{n} Z_i (Y_i - D_i \beta) \right)^2 + \gamma_n \beta^2.
	\]
	
	This is a convex function, so to minimize it, set its partial derivative with respect to $\beta$ to zero. This gives us
	\[
	\frac{\partial}{\partial \beta} \mathcal{L} (\beta) = 2 \left( \sum_{i=1}^{n} Z_i (Y_i - D_i \beta) \right) \left( -\sum_{i=1}^n Z_iD_i \right) + 2\gamma_n \beta = 0.
	\]	
	That is,
	\[
	\left( \sum_{i=1}^{n} Z_i D_i \beta - \sum_{i=1}^n Z_i Y_i \right) \left( \sum_{i=1}^n Z_iD_i \right) + \gamma_n \beta = 0.
	\]
	This gives us
	\[
	\left(\left( \sum_i Z_iD_i \right)^2 + \gamma_n \right) \beta = \left( \sum_i Z_iD_i \right) \left( \sum_i Z_iY_i \right),
	\]
	or
	\[
	\hat{\beta} = \frac{\sum_i Z_iY_i}{\sum_i Z_iD_i + \frac{\gamma_n}{\sum_i Z_iD_i}}.
	\]
	
	Comparing this form with the definition of the ridge IV estimator in Equation (\ref{eqn-ridge-defn}), we know
	\[
	\frac{\lambda_n}{n} = \frac{\gamma_n}{\sum_i Z_iD_i}.
	\]
	Rearranging the equation gives us the desired result.
		
\end{proof}



\section{Results}
\label{sec-results}

We look at a simulation design where we are interested in seeing how the MSE metric varies with varying levels of aggressiveness in the ridge penalty, $\lambda$. The linear IV model for this simulation is
\[
Y=2.83+\text{effect}\times D+\epsilon,
\]
\[
D=-0.346+0.072\times Z-0.67\times\epsilon+\eta.
\]
This corresponds to our model in equation \ref{model}, with coefficients set to $\beta_0=2.83, \pi_0 = -0.346, \pi_1 = 0.072.$ The coefficients and sample sizes are chosen to match a study from the AER \citep{hornung2014immigration}. $\epsilon,\eta,Z$ are some independent normals.\\ 

In the first set of results, we allow the first stage coefficient size, $]pi_1$ to vary. 
Our simulation study is as follows:
\begin{enumerate}
	\item Pick a first stage coefficient from 0 to 1. 
	\item Simulate 10,000 datasets of N=150 each.
	\item Compute MSE of that estimated coefficient for $\beta_1$ (including intercept).
\end{enumerate}

\begin{figure}[H]
	\centering
	\subfloat[$\lambda = 0$ (Regular 2SLS)]{{\includegraphics[width=5cm]{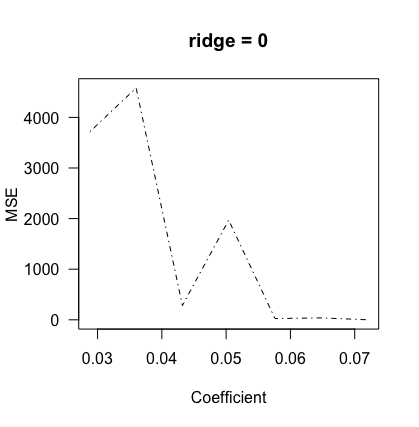} }}
	\qquad
	\subfloat[$\lambda = 4.0$ (Moderate regularization)]{{\includegraphics[width=5cm]{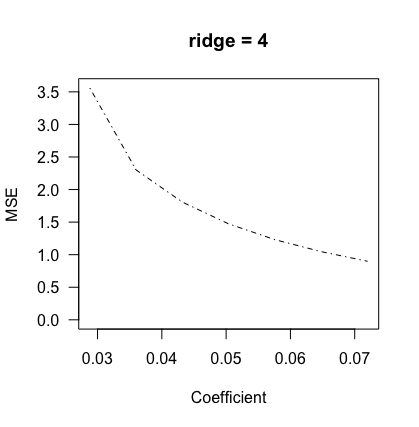} }}%
	\qquad
	\subfloat[$\lambda = 10$ (Aggressive regularization)]{{\includegraphics[width=5cm]{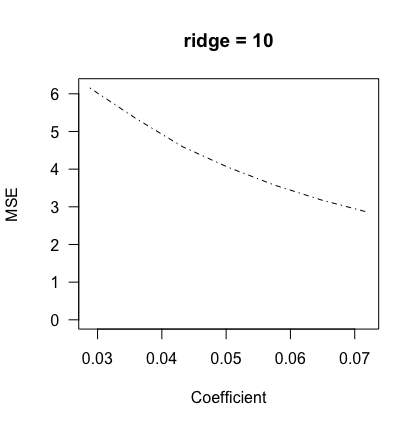} }}%
	\caption{MSE of the IV estimator by the level of ridge penalization, $\lambda$. Figure (a) shows the regular 2SLS case, which we compare against. It is extremely unstable when the instrument is weak, i.e. when we are closer to 0 on the x-axis. When using a moderate level of regularization as in (b), we see that MSE is reduced by three orders of magnitude. Figure (c) sounds the alarm against too much regularization because then bias can become a dominant force to vie with and MSE starts to pick up again.}%
	\label{fig:mse-ridgeiv}%
\end{figure}

In the second set of results, we allow effect size, $\beta_1$ to vary. This is to show that for a given first stage strength, we may still have some use for regularization for very small coefficients. This simulation study is as follows:
\begin{enumerate}
	\item Pick an effect size, i.e. $\beta_1$ from 0 to 3.475.
	\item Simulate 10,000 datasets of N=150 each.
	\item Compute MSE of that estimated coefficient (including intercept).
\end{enumerate}

\begin{figure}[H]
	\centering
	\subfloat[$\lambda = 0$ (Regular 2SLS)]{{\includegraphics[width=5cm]{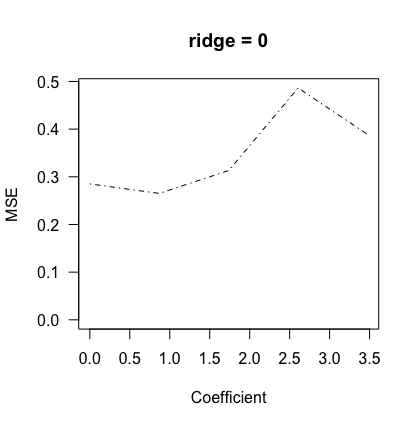} }}
	\qquad
	\subfloat[$\lambda = 0.8$ (Moderate regularization)]{{\includegraphics[width=5cm]{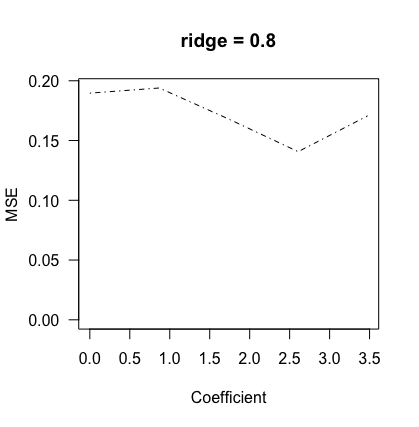} }}%
	\qquad
	\subfloat[$\lambda = 3.0$ (Aggressive regularization)]{{\includegraphics[width=5cm]{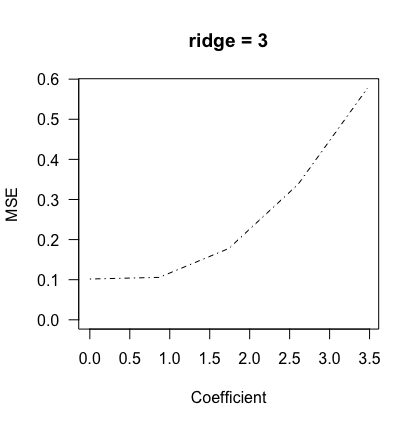} }}%
	\caption{MSE of the IV estimator by the level of ridge penalization, $\lambda$. Figure (a) shows the regular 2SLS case, which we compare against. When using a moderate level of regularization as in (b), we see that MSE may be halved. However, figure (c) sounds the alarm against too much regularization because then bias can become a dominant force to vie with, particularly when effect sizes are large.}%
	\label{fig:mse-ridgeiv}%
\end{figure}

We show three cases here. The case of zero regularization is the classic 2SLS estimator, which has a certain level of mean squared error, which remains high even when the effect size being studied is small. 


\section{Conclusion} \label{sec-disc}
	In this paper, we introduced a novel estimator, called ``ridge IV.'' We motivated it as the solution to the GMM objective function for instrumental variable regression with an additional $L_2$ penalty. In the theoretical case with ``large'' coefficients, we showed that ridge IV does not hurt our estimation while in the weak first stage case, we showed that it leads to strong improvements in the mean squared error of the estimand. We then validated the theory using simulations inspired by data designs of papers in the American Economic Review.\\
	
	While this paper is primarily a theoretical contribution to the literature, we outline several avenues for further research. First, it would be helpful to provide a method for tuning the $\lambda$ parameter for a given dataset. Our results operated at the abstract level of the big-$O$ notation, but for practical use, more information is needed. We would also like to see results on how exactly to perform inference with ridge IV. Explicit demonstration of Type 1 error control, for instance, would be very useful. \\ 
	
	We would like to tie back the results of ridge IV and interpret them in the context of the problem of IV sensitivity to outliers, which is related to instability. We conjecture that ridge IV under an appropriate penalization scheme can address this as well. We would also like to expand the results provided in this paper in the general case of over-identifying instruments. A natural extension of our estimator is given in Equation (\ref{defn-overidentified-ridgeIV}).\\
	
\bibliographystyle{ecta}
\bibliography{references}

\end{document}